\begin{document}


\title{ENERGY CONDITIONS FOR A $(WRS)_4$ SPACETIME IN $F(R)$-GRAVITY}
\author{Avik De\inst{1}\orcidlink{0000-0001-6475-3085}, Tee-How Loo\inst{2}\orcidlink{0000-0003-4099-9843}, Simran Arora\inst{3}\orcidlink{0000-0003-0326-8945}  \and P.K. Sahoo\inst{4}\orcidlink{0000-0003-2130-8832}
}                     
%
%
\institute{Department of Mathematical and Actuarial Sciences, Universiti Tunku Abdul Rahman, Jalan Sungai Long, 43000 Cheras, Malaysia,  Email: de.math@gmail.com
\and Institute of Mathematical Sciences, University of Malaya, 50603 Kuala Lumpur, Malaysia, Email: looth@um.edu.my
\and Department of Mathematics, Birla Institute of
Technology and Science-Pilani, Hyderabad Campus, Hyderabad-500078,
India, Email: dawrasimran27@gmail.com
\and Department of Mathematics, Birla Institute of
Technology and Science-Pilani, Hyderabad Campus, Hyderabad-500078,
India,  Email:  pksahoo@hyderabad.bits-pilani.ac.in
}
\date{Received: 21st Oct. 2020/ Accepted: 09th Feb 2021}
%
\abstract{
The objective of the present paper is to study 4-dimensional weakly Ricci symmetric spacetimes  $(WRS)_4$ 
with non-zero constant Ricci scalar. 
We prove that such a  $(WRS)_4$ satisfying $F(R)$-gravity field equations represents a perfect fluid with vanishing vorticity. 
Some energy conditions are studied under the current setting to constrain the functional form of $F(R)$. 
We examine a couple of popular toy models in $F(R)$-gravity, 
$F(R)=e^{\alpha R}$ where $\alpha$ is constant and $F(R)=R-\beta \tanh(R)$, 
$\beta$ is a constant. We also find that the equation of state parameter (EoS) in 
both models supports the universe's accelerating behavior, i.e., $\omega=-1$. 
According to the recently suggested observations of accelerated expansion, both cases define that the null, 
weak, and dominant energy conditions justify their requirements while the strong energy conditions violate them.
\PACS{
      {04.50.Kd}   \and
      {98.80.Es}{98.80.Cq}{98.80.-k}
     } 
} 
\titlerunning{ENERGY CONDITIONS FOR A $(WRS)_4$ SPACETIME IN $F(R)$-GRAVITY} 
\authorrunning{A. De et al.}
\maketitle

\section{{Introduction}}
	
Einstein's field equations (EFE) 
\begin{equation}
R_{ij}-\frac{R}{2}g_{ij}=\kappa^2 T_{ij},
\end{equation}
 where $\kappa^2=8\pi G$, $G$ being Newton's gravitational constant and $R=R^i_i$ the Ricci scalar, imply that the energy-momentum tensor $T_{ij}$ is of vanishing divergence. This requirement is satisfied if $T_{ij}$ is covariantly constant. Chaki and Ray showed that a general relativistic spacetime with covariant-constant energy-momentum tensor is Ricci symmetric, that is, $\nabla_iR_{jl} = 0$ \cite{chakiray}. Generalizing this concept, in \cite{tamsbin93} Tam\'assy and Binh introduced the notion of a weakly Ricci symmetric manifold $(WRS)_{n}$ as a non-flat Riemannian manifold of dimension $n> 2$ whose Ricci tensor $R_{ij}$ of type $(0,2)$ is not identically zero and satisfies the condition
\begin{equation} 
\nabla _{i}R_{jl}= A_iR_{jl}+ B_jR_{li}+ D_lR_{ij},\label{wrs}
\end{equation}
where $A_i,B_i,D_i$ are three non-zero 1-forms. 

General relativity models the universe as a four-dimensional smooth, connected, para-compact, Hausdorff spacetime manifold with a Lorentzian metric of signature $(-,+,+,+)$. A Lorentzian manifold is said to be a weakly Ricci symmetric spacetime if the Ricci tensor satisfies (\ref{wrs}).

A compact, orientable $(WRS)_n$ of constant Ricci scalar without boundary admitting a non-isometric conformal transformation was shown to be isometric to a sphere \cite{soochow}. The authors also obtained a sufficient condition for a compact, orientable $(WRS)_n$ without boundary to be conformal to a sphere in $E_{n+1}$. A conformally flat $(WRS)_4$ with non-zero Ricci scalar is proved to be a Robertson-Walker spacetime \cite{desahanous}. Shaikh and Kundu \cite{shaikh} proved some necessary and sufficient conditions for a warped product manifold to be $(WRS)_n$ and thus produced a condition for a Robertson-Walker spacetime to be $(WRS)_4$. Recently, Mantica and Molinari used Lovelock’s identity to discuss some general properties of $A_i$, $B_i$ and $D_i$ and proved that in a $(WRS)_n$, $B_i=D_i$ if the Ricci tensor is non-singular \cite{mantica}. A conformally flat perfect fluid $(WRS)_4$ spacetime obeying EFE without cosmological constant and having the basic vector field of $(WRS)_4$ as the velocity vector field of the fluid is infinitesimally spatially isotropic relative to the velocity vector field \cite{deghosh}. A non-Einstein quasi-conformally flat $(WRS)_n$ was shown to be $K$-special conformally flat and isometrically immersed in $E_{n+1}$ as a hypersurface \cite{shaikhacta}. De et al. (\cite{deghosh}, \cite{desahanous}) published several interesting results about curvature conditions in a $(WRS)_n$ spacetime considering $B_i\neq D_i$.  The first author \cite{avik} recently investigated  a $(WRS)_4$ spacetime satisfying EFE considering $B_i\neq D_i$. Several examples of $(WRS)_n$ are present in the literature. 
	
EFE are unable to explain the late time inflation of the universe without assuming the existence of some yet undetected components abbreviated as dark energy. This motivated some researchers to extend it to get some higher order field equations of gravity. One of these modified gravity theories is obtained by replacing the Ricci scalar $R$ in the Einstein-Hilbert action with an arbitrary function $F(R)$ of $R$. Of course the viability of such functions are constrained by several observational data and scalar-tensor theoretical results. Additionally we can always propose some phenomenological assumption about the form of the function $F(R)$ and later verify its validity from the present viability criteria. 

The matter content in EFE is more often assumed to be a perfect fluid continuum. 
Let $u_i$ denote a unit timelike vector. Then the spatial part $h_{ij}$ of the metric $g_{ij}$ can be defined as $h_{ij}=g_{ij}+u_iu_j$ so that $h_{ij}u^i=0$. $h^i_j$ thus can be called the projection operator orthogonal to the vector $u^i$. The energy momentum tensor $T_{ij}$ of type $(0,2)$ is given by $ T_{ij}=ph_{ij}+\rho u_iu_j,$ 
where $\rho=T_{ij}u^iu^j $ and $p=T_{ij}h^{ij}$ are the energy density and the isotropic pressure respectively, $u^i$ is called the four velocity vector of the fluid. The expansion scalar is given by $\theta=\theta_{ij}h^{ij}=\nabla_lu^l$. The shear is given by a symmetric tensor $s_{ij}=\theta_{ij}-\frac{1}{3}h_{ij}\theta$ and the vorticity by an anti-symmetric tensor $\Omega_{ij}=\nabla_{[_l}u_{k]}h^k_ih^l_j$. Both shear and vorticity tensors are orthogonal to $u^i$.

In addition, we assume that $p$ and $\rho $ are related by an equation of the form $p=\omega \rho$. 
Moreover, if $p=\rho$, then the perfect fluid is termed as stiff matter. 
The stiff matter era preceded the radiation era with $p=\frac{\rho}{3}$, 
the dust matter era with $p=0$ and followed by the dark matter era with $p=-\rho$ \cite{c2}. 
There are some works on energy conditions in different gravity models, 
in which the model parameters are often constrained by the equation of state parameter resulting in an accelerating universe (\cite{Sanjay}, \cite{Simran}).

The present paper is organized as follows: After the introduction, in Section 2 we study weakly Ricci symmetric spacetimes with constant Ricci scalar which satisfies $F(R)$-gravity equations. We show that the perfect fluid is vorticity free. In the next section we discuss the energy conditions in such a setting, followed by some toy models of $F(R)$-gravity, investigated in $(WRS)_4$ with constant $R$.  

\section{{$(WRS)_4$ satisfying $F(R)$-gravity}}

This section is devoted to study weakly Ricci symmetric spacetimes, so the covariant derivative of $R_{ij}$ satisfies (\ref{wrs}). If we denote $v_i=B_i-D_i$, (\ref{wrs}) gives us		
\begin{equation}
0=v_jR_i^l-R_{ji}v^l,\label{new}
\end{equation}
which on contraction over $i,j$ gives 
\begin{equation}
0=v^iR_i^l-Rv^l.\label{new1}
\end{equation}
Further, by transvecting with $v^j$ in (\ref{new}) we get
\begin{equation}
0=v^jv_jR_i^l-v^jR_{ji}v^l,
\end{equation}
which by (\ref{new1}) reduces to 
\begin{equation}
0=v^jv_jR_i^l-Rv_iv^l.
\end{equation}
Therefore, $R = 0$ if and only if  either $R_i^l = 0$ which is inadmissible by the definition of $(WRS)_4$ or $B_i-D_i=v_i=0$. So throughout the study we strictly assume that the spacetime is not scalar flat. Moreover, we can express $R_{ij}=-R v_i v_j$, where $v_i=B_i-D_i$ is considered to be unit timelike.

We consider a modified Einstein-Hilbert action term \cite{f(R)},
 \[S=\frac{1}{\kappa^2}\int F(R) \sqrt{-g}d^4x +\int L_m\sqrt{-g}d^4x,\]
where $F(R)$ is an arbitrary function of the Ricci scalar $R$, $L_m$ is the matter Lagrangian density, and we define the stress-energy tensor of matter as 
$$T_{ij}=-\frac{2}{\sqrt{-g}}\frac{\delta(\sqrt{-g}L_m)}{\delta g^{ij}}.$$ 

By varying the action $S$ of the gravitational field with respect to the metric tensor components $g^{ij}$ and using the least action principle we obtain the field equation
\begin{equation}
F_R(R)R_{ij}-\frac{1}{2}F(R)g_{ij}+(g_{ij}\Box-\nabla_i\nabla_j)F_R(R)=\kappa^2T_{ij},\label{FR}
\end{equation}
where $\Box$ represents the d'Alembertian operator, $F_R=\frac{\partial F(R)}{\partial R}$. Einstein's field equations can be reawakened by putting $F(R)=R$. For a constant Ricci scalar, we can express the above field equations (\ref{FR}) as follows:
\begin{equation}
 R_{ij}-\frac{R}{2}g_{ij}=\frac{\kappa^2}{F_R(R)}T^{\text{eff}}_{ij},\label{fr}
\end{equation}
where 
$$T^{\text{eff}}_{ij}=T_{ij}+T^{\text{curv}}_{ij}, \quad T^{\text{curv}}_{ij}=\frac{F(R)-RF_R(R)}{2\kappa^2}g_{ij}.$$ 
Remembering the term $\kappa^2=8\pi G$, the quantity $G^{\text{eff}}=\frac{G}{F_R(R)}$ can be regarded as the effective gravitational coupling strength in analogy to what is done in Brans-Dicke type scalar-tensor gravity theories and further the positivity of $G^{\text{eff}}$ (equivalent to the requirement that the graviton is not a ghost) imposes that the effective scalar degree of freedom or the scalaron term $f_R(R)> 0$.

If a $(WRS)_4$ satisfies (\ref{fr}), we can express 
\textbf{\begin{equation}
T^{\text{eff}}_{ij}=-\frac{RF_R(R)}{\kappa^2}v_iv_j-\frac{RF_R(R)}{2\kappa^2}g_{ij},\label{eff}
\end{equation}}
giving \textbf{$p^{\text{eff}}=p+\frac{F(R)-RF_R(R)}{2\kappa^2}$ and $\rho^{\text{eff}}=\rho-\frac{F(R)-RF_R(R)}{2\kappa^2}$.}

Thus beside the usual relation $T=\frac{RF_R(R)-2F(R)}{\kappa^2}$, between the trace of the energy momentum tensor and the geometry of the spacetime; under the present situation, this particular $T_{ij}$ denotes a perfect fluid type energy-momentum tensor whose isotropic pressure $p=-\frac{F(R)}{2\kappa^2}$ and density $\rho=\frac{F(R)-2RF_R(R)}{2\kappa^2}$ satisfy some specific relations with the geometry of the spacetime. This leads to our first result:\\

\begin{theorem}\label{pfthm}
In a $(WRS)_4$ with constant $R$ satisfying $F(R)$-gravity, the matter content is a perfect fluid with four-velocity vector $v_i$; constant isotropic pressure $p=-\frac{F(R)}{2\kappa^2}$ and constant energy density $\rho=\frac{F(R)-2RF_R(R)}{2\kappa^2}$.
\end{theorem}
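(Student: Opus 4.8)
The plan is to reduce the $F(R)$ field equation to an Einstein-like form by exploiting the constancy of $R$, substitute the Ricci expression already derived, and then read off the fluid variables by matching tensor structures. First I would recall the structural fact established above, namely $R_{ij}=-R\,v_iv_j$ with $v_i=B_i-D_i$ a unit timelike vector and $R\neq 0$ (the spacetime being assumed not scalar flat). Next, since $R$ is constant, so are $F(R)$ and $F_R(R)$; hence $\nabla_i\nabla_j F_R(R)=0$ and $\Box F_R(R)=0$, so the field equation (\ref{FR}) collapses to
\[F_R(R)\,R_{ij}-\tfrac{1}{2}F(R)\,g_{ij}=\kappa^2 T_{ij}.\]
This is the crucial simplification: without constancy of $R$ the derivative terms would spoil the perfect-fluid structure.

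Then I would insert $R_{ij}=-R\,v_iv_j$ into this reduced equation to obtain
\[T_{ij}=-\frac{R\,F_R(R)}{\kappa^2}\,v_iv_j-\frac{F(R)}{2\kappa^2}\,g_{ij}.\]
Comparing this with the perfect-fluid decomposition $T_{ij}=p\,h_{ij}+\rho\,v_iv_j=p\,g_{ij}+(p+\rho)\,v_iv_j$, where $h_{ij}=g_{ij}+v_iv_j$ and $v_i$ plays the role of the four-velocity, I would match coefficients: the $g_{ij}$ term yields $p=-\frac{F(R)}{2\kappa^2}$, while the $v_iv_j$ term yields $p+\rho=-\frac{R\,F_R(R)}{\kappa^2}$, whence $\rho=\frac{F(R)-2R\,F_R(R)}{2\kappa^2}$. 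The same values follow by directly contracting the expression for $T_{ij}$ with $v^iv^j$ and with $h^{ij}$, using the normalization $v_iv^i=-1$.

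The argument is essentially algebraic, so I do not anticipate a serious obstacle; the only points requiring care are the sign bookkeeping that comes from the Lorentzian normalization $v_iv^i=-1$, and the vanishing of the covariant-derivative terms, which rests entirely on $R$ being constant and must be invoked explicitly. Finally, the constancy of $p$ and $\rho$ is immediate: both quantities are built solely from $F(R)$ and $F_R(R)$ evaluated at the constant scalar $R$, hence they take the same value at every point of the spacetime, which completes the proof and simultaneously identifies $v_i$ as the four-velocity of the perfect fluid.
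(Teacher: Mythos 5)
Your proposal is correct and follows essentially the same route as the paper: the constancy of $R$ kills the $(g_{ij}\Box-\nabla_i\nabla_j)F_R(R)$ terms in (\ref{FR}), the relation $R_{ij}=-R\,v_iv_j$ is substituted, and $p$ and $\rho$ are read off by matching against the perfect-fluid decomposition with $v_i$ as four-velocity. The only cosmetic difference is that the paper routes the algebra through the effective tensor $T^{\text{eff}}_{ij}=T_{ij}+\frac{F(R)-RF_R(R)}{2\kappa^2}g_{ij}$ in (\ref{eff}) before extracting the same $p=-\frac{F(R)}{2\kappa^2}$ and $\rho=\frac{F(R)-2RF_R(R)}{2\kappa^2}$, whereas you match coefficients on $T_{ij}$ directly.
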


\begin{theorem}The matter content in a $(WRS)_4$ spacetime with constant $R$ satisfying $F(R)$-gravity obeys the simple barotropic equation of state $p=\omega \rho$ if and only if $F(R)=\sigma R^\frac{1+\omega}{2\omega}$.
\end{theorem}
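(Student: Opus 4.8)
The plan is to read off the pressure and density from Theorem~\ref{pfthm} and convert the equation of state into a first-order differential equation for $F$. Starting from the expressions $p=-\frac{F(R)}{2\kappa^2}$ and $\rho=\frac{F(R)-2RF_R(R)}{2\kappa^2}$ established there, I would substitute both into the requirement $p=\omega\rho$ and clear the common factor $2\kappa^2$. This yields
\[
-F(R)=\omega\bigl(F(R)-2RF_R(R)\bigr),
\]
which, after collecting the $F(R)$ terms on one side, becomes the tidy relation $(1+\omega)F(R)=2\omega R\,F_R(R)$.

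Next I would interpret this relation as a separable ODE for the functional form $F(R)$. Dividing through by $F$ and by $2\omega R$ gives $\frac{F_R}{F}=\frac{1+\omega}{2\omega}\,\frac{1}{R}$; these divisions are legitimate because the spacetime is assumed throughout the previous section not to be scalar-flat, so $R\neq0$, and because $F\equiv0$ would force $p$ to vanish identically. Integrating both sides produces $\ln F=\frac{1+\omega}{2\omega}\ln R+\text{const}$, and exponentiating gives $F(R)=\sigma R^{\frac{1+\omega}{2\omega}}$, where $\sigma$ is the integration constant. This settles the forward implication.

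For the converse I would simply differentiate the proposed form, obtaining $F_R=\sigma\frac{1+\omega}{2\omega}R^{\frac{1+\omega}{2\omega}-1}$, substitute $F$ and $F_R$ back into the formulae for $p$ and $\rho$, and verify directly that the ratio $p/\rho$ collapses to $\omega$, the power of $R$ cancelling throughout. This confirms that the power-law form is not merely necessary but also sufficient.

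I expect the only delicate point to be conceptual rather than computational. Although $R$ is a fixed constant on the given spacetime, the identity $(1+\omega)F=2\omega R\,F_R$ must be read as a constraint on $F$ as a function of its argument, so that integrating it as an ODE is the natural way to pin down the admissible functional form rather than merely a pointwise numerical condition. Once this reading is adopted, every remaining step is routine algebra and elementary integration, and the non-scalar-flat hypothesis inherited from the earlier discussion guarantees that the separation of variables is valid.
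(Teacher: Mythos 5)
Your proposal is correct and follows essentially the same route as the paper: substituting the expressions for $p$ and $\rho$ from Theorem~1 into $p=\omega\rho$ to obtain $(1+\omega)F(R)=2\omega R\,F_R(R)$, then separating variables and integrating to reach $F(R)=\sigma R^{\frac{1+\omega}{2\omega}}$. You add slightly more care than the paper does — explicitly checking the converse by back-substitution and justifying the divisions via $R\neq0$ — but these are refinements of the same argument, not a different one.
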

\begin{proof}
Suppose $p=\omega \rho$. Then from Theorem \ref{pfthm} it is quite straightforward that $(1+\omega)F(R)=2\omega RF_R(R)$ or $\frac{1+\omega}{2\omega}\frac{\partial R}{R}=\frac{\partial F(R)}{F(R)}$ and thus after integrating the equation we get the result, where $\sigma$ is an integrating constant.
\end{proof}
\begin{corollary}Corresponding to the different states of cosmic evolution of the universe we can conclude:
\begin{itemize}
\item The perfect fluid denotes dark matter $(\omega=-1)$ if $F(R)$ is a constant function of $R$ or alternately if the spacetime is scalar flat.
\item The perfect fluid denotes stiff matter $(\omega =1)$ if  $F(R)$ is a constant multiple of $R$. 
\item The perfect fluid denotes radiation $(\omega=1/3)$ if $F(R)$ is a constant multiple of $R^2$.
\item The perfect fluid cannot represent a dust era for any viable $F(R)$. 
\end{itemize}
\end{corollary}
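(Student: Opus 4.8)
The plan is to read every bullet straight off the characterization established in the preceding theorem, namely $p=\omega\rho$ if and only if $F(R)=\sigma R^{\frac{1+\omega}{2\omega}}$. This reduces almost the entire corollary to evaluating the exponent $\frac{1+\omega}{2\omega}$ at each prescribed value of $\omega$, with the two degenerate eras handled by returning to the defining expressions $p=-\frac{F(R)}{2\kappa^2}$ and $\rho=\frac{F(R)-2RF_R(R)}{2\kappa^2}$ from Theorem \ref{pfthm}.

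First I would dispose of the two generic cases. For stiff matter $\omega=1$ the exponent is $\frac{1+1}{2}=1$, so $F(R)=\sigma R$, a constant multiple of $R$; for radiation $\omega=\tfrac13$ the exponent is $\frac{4/3}{2/3}=2$, so $F(R)=\sigma R^2$, a constant multiple of $R^2$. Both are immediate arithmetic substitutions into the preceding theorem and need no further argument.

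Next I would treat dark matter $\omega=-1$. Substituting gives exponent $\frac{1+(-1)}{2(-1)}=0$, hence $F(R)=\sigma R^0=\sigma$ is constant, which is the first alternative. To recover the "alternately scalar flat" clause I would not lean on the preceding theorem (whose proof tacitly divided by $\omega$ and by $R$, suppressing the $R=0$ branch), but instead impose $p=-\rho$ directly on the expressions of Theorem \ref{pfthm}: the combination $p+\rho$ collapses to $-\frac{RF_R(R)}{\kappa^2}$, so $p=-\rho$ forces $RF_R(R)=0$, whence either $F_R(R)=0$ (that is, $F$ constant) or $R=0$ (scalar flat). This is the single step where genuine care is needed.

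Finally, for the dust era I would use $p=0$. Since $p=-\frac{F(R)}{2\kappa^2}$, vanishing pressure forces $F(R)\equiv0$, which is not a viable gravitational Lagrangian; equivalently, the formal assignment $\omega=0$ sends the exponent $\frac{1+\omega}{2\omega}$ to infinity, so no finite power of $R$ can reproduce it. I expect the only real subtlety of the whole corollary to be this bookkeeping around the two limiting values of $\omega$ — the $\omega=-1$ case needing the scalar-flat alternative extracted by hand, and the $\omega=0$ case being excluded — since every remaining statement is a one-line evaluation of the exponent.
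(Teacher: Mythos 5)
Your proposal is correct and follows essentially the same route the paper intends: the corollary is read off from the preceding theorem by evaluating the exponent $\frac{1+\omega}{2\omega}$ at $\omega=1,\,\tfrac13$, with the degenerate eras handled separately. Your extra care on $\omega=-1$ --- deriving $p+\rho=-\frac{RF_R(R)}{\kappa^2}$ directly from Theorem~\ref{pfthm} so that $RF_R(R)=0$ yields the two branches $F_R(R)=0$ or $R=0$ --- is exactly the right way to recover the ``alternately scalar flat'' clause that the formal exponent substitution (and the theorem's derivation, which divides by $\omega R$) would otherwise suppress, and your exclusion of dust via $p=-\frac{F(R)}{2\kappa^2}=0$ forcing a vanishing Lagrangian matches the paper's claim.
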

\begin{theorem}
In a $(WRS)_4$ with constant $R$ satisfying $F(R)$-gravity, the matter content is a perfect fluid either with vanishing expansion scalar, acceleration vector and vorticity or represents a dark matter. 
\end{theorem}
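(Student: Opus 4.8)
The plan is to exploit the rank-one structure of the Ricci tensor, $R_{ij}=-Rv_iv_j$ with $v_iv^i=-1$, to show that $\nabla_i v_j$ factorises, and then to read off the kinematic quantities of the fluid whose four-velocity is $v_i$.

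First I would insert $R_{ij}=-Rv_iv_j$ (with $R$ constant) into the defining relation (\ref{wrs}); dividing through by the nonzero scalar $-R$ yields $(\nabla_i v_j)v_l+v_j(\nabla_i v_l)=A_iv_jv_l+B_jv_lv_i+D_lv_iv_j$. Contracting this with $v^l$ and using the normalisation identities $v^lv_l=-1$ and $v^l\nabla_i v_l=0$ isolates $\nabla_i v_j=A_iv_j+B_jv_i-(D_lv^l)v_iv_j$. A second contraction, now with $v^j$, again uses $v^j\nabla_i v_j=0$ and forces $A_i$ to be proportional to $v_i$, namely $A_i=(B_kv^k+D_kv^k)v_i$. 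Substituting this back collapses the expression to the factorised form $\nabla_i v_j=v_i w_j$, where $w_j:=B_j+(B_kv^k)v_j$ satisfies $v^jw_j=0$. This algebraic reduction, passing from the three free $1$-forms down to a single factorised identity, is the step I expect to be the main obstacle, since it requires the normalisation conditions to be applied in exactly the right order.

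Once $\nabla_i v_j=v_iw_j$ is in hand, the kinematics are immediate. Because $h_i^k v_k=0$, the projection operator annihilates the leading $v_i$ factor, so the projected symmetric and antisymmetric parts of $\nabla v$ both vanish; hence the shear $s_{ij}$ and the vorticity $\Omega_{ij}$ are identically zero. The trace gives the expansion $\theta=\nabla^i v_i=v^iw_i=0$, while the acceleration comes out as $a_j:=v^i\nabla_i v_j=-w_j$. Thus the only kinematic quantity not already forced to vanish is the acceleration, which is governed entirely by $w_j$.

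Finally, to obtain the stated dichotomy I would impose conservation of the matter stress-energy, $\nabla^i T_{ij}=0$, which holds in metric $F(R)$-gravity as a consequence of the contracted Bianchi identity applied to (\ref{fr}) together with $F_R$ being constant. Writing the perfect fluid of Theorem \ref{pfthm} as $T_{ij}=pg_{ij}+(p+\rho)v_iv_j$ with $p,\rho$ constant, the divergence reduces to $(p+\rho)(\theta v_j+a_j)=0$, i.e. $-(p+\rho)w_j=0$. Either $w_j=0$, in which case the acceleration also vanishes and the fluid is simultaneously expansion-, acceleration- and vorticity-free; or $p+\rho=0$, that is $\omega=-1$, which is precisely the dark matter equation of state. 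These are exactly the two alternatives of the theorem.
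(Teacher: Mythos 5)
Your proposal is correct, and it takes a genuinely different route from the paper's. The paper's own proof never touches the kinematic structure of $\nabla_iv_j$: it notes that constant $R$ makes $p$ and $\rho$ constant, applies conservation $g^{il}\nabla_lT_{ij}=0$ to \eqref{eff} to get $0=(p+\rho)\{(\nabla_iv^i)v_l+v^i\nabla_iv_l\}$, contracts with $v^l$ (using $v^l\nabla_iv_l=0$) to obtain the dichotomy $p+\rho=0$ or $\theta=a_l=0$, and then dismisses the vorticity with the one-line remark that a conservative vector field is irrotational --- a justification that is terse at best. You instead feed the rank-one form $R_{ij}=-Rv_iv_j$ back into the defining relation \eqref{wrs} and, by the two contractions with $v^l$ and $v^j$, extract the factorisation $\nabla_iv_j=v_iw_j$ with $w_j=B_j+(B_kv^k)v_j$, $v^jw_j=0$; I checked the algebra, including the consistency of this form with $v_i\propto B_i-D_i$, and it is sound. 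This buys you strictly more than the paper proves: since $h^l_jv_l=0$, the projected symmetric and antisymmetric parts of $\nabla v$ vanish identically, so the expansion, shear and vorticity are zero in \emph{both} branches of the dichotomy (the paper's statement only forces $\theta=0$ in the non-dark-matter branch, and says nothing about shear), and the conservation equation, which you correctly justify from the contracted Bianchi identity applied to \eqref{fr} with $F_R$ constant rather than merely invoking it, is left with only the acceleration term $-(p+\rho)w_j=0$, so that $w_j=0$ (whence in fact $\nabla_iv_j=0$, a covariantly constant velocity) or $p+\rho=0$. The trade-off is length: the paper's argument is two lines, while yours requires the algebraic reduction you rightly flag as the main labour; in exchange you get a rigorous replacement for the paper's hand-waved vorticity step and a sharper theorem.
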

\begin{proof}
Since $R$ is constant, the pressure and density of the perfect fluid as expressed earlier are constant too. 
Using the conservation of energy $g^{il}\nabla_lT_{ij}=0$ and  \eqref{eff}, we obtain
\begin{equation}
0=(p+\rho)\{\nabla_i v^i v_l+ v^i\nabla_i v_l\}, 
\end{equation}
Since $ v_l v^l=-1$, $\nabla_i v_l v^l=0$. 
Hence, either $p+\rho=0,$ or $\nabla_i v^i = v^i\nabla_i v_l=0$.

Since a conservative vector field is always irrotational, we get the vorticity of the perfect fluid is zero. 
\end{proof}

\begin{theorem}
In a perfect fluid spacetime with constant $R$ satisfying the $F(R)$-gravity, 
if the four velocity vector is given by $ v^i$, then its isotropic pressure and density are given respectively by 
$$p=-\frac{F(R)}{2\kappa^2}+\frac{F_R(R)}{3\kappa^2}\{R+ v^i v^jR_{ij}\}$$
and 
$$\rho=\frac{2v^i v^jR_{ij}F_R(R)+F(R)}{2\kappa^2}.$$
\end{theorem}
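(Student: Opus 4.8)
The plan is to start from the full $F(R)$ field equation (\ref{FR}) and specialise it to the constant-$R$ case. Since $R$ is constant, $F_R(R)$ is likewise constant, so every derivative acting on it vanishes, i.e. $\Box F_R(R)=0$ and $\nabla_i\nabla_j F_R(R)=0$. Hence (\ref{FR}) collapses to the purely algebraic relation
\begin{equation*}
\kappa^2 T_{ij}=F_R(R)R_{ij}-\tfrac12 F(R)g_{ij}.
\end{equation*}
Everything then follows by suitable contractions of this identity against $v^iv^j$ and against the projector $h^{ij}=g^{ij}+v^iv^j$.

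First I would recover the energy density. By definition $\rho=T_{ij}v^iv^j$, so I transvect the displayed field equation with $v^iv^j$. Using that $v^i$ is unit timelike, $g_{ij}v^iv^j=v_iv^i=-1$, the metric term contributes $+\tfrac12 F(R)$ and the curvature term contributes $F_R(R)R_{ij}v^iv^j$. Dividing by $\kappa^2$ gives $\rho=\big(2F_R(R)R_{ij}v^iv^j+F(R)\big)/(2\kappa^2)$, which is exactly the claimed expression.

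Next I would recover the isotropic pressure. The key here is the projection identity $p=\tfrac13 T_{ij}h^{ij}$, equivalently $T_{ij}h^{ij}=3p$ for the perfect-fluid form $T_{ij}=p\,h_{ij}+\rho\,v_iv_j$. Contracting the field equation with $h^{ij}$ I need the two traces $g_{ij}h^{ij}=4+g_{ij}v^iv^j=3$ and $R_{ij}h^{ij}=R+R_{ij}v^iv^j$, the first term being the Ricci scalar $R=R_{ij}g^{ij}$. Substituting these and dividing by $3\kappa^2$ yields $p=-F(R)/(2\kappa^2)+F_R(R)\{R+v^iv^jR_{ij}\}/(3\kappa^2)$, as stated.

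The computation carries no genuine obstacle; it is pure contraction bookkeeping, and I expect the whole proof to be two or three lines. The only point requiring care is the factor of $3$, which arises because $h^{ij}$ traces the three spatial directions of the four-dimensional spacetime ($g_{ij}h^{ij}=4-1=3$); this is precisely what produces the $\tfrac13$ in the pressure formula and distinguishes it from the density formula. I would also remark that this statement does not actually invoke the $(WRS)_4$ hypothesis at all—only the perfect-fluid structure, constancy of $R$, and the $F(R)$ field equations—so it holds for any perfect fluid spacetime of the stated type.
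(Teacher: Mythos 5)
Your proof is correct and takes essentially the same route as the paper: both reduce the field equation to the algebraic relation $F_R(R)R_{ij}-\tfrac{1}{2}F(R)g_{ij}=\kappa^2T_{ij}$ (the derivative terms dropping out for constant $R$) and obtain $\rho$ by transvecting with $v^iv^j$, exactly as in the paper's equation (\ref{e}). The only difference is bookkeeping: where the paper takes the full $g^{ij}$-trace (\ref{aa}) and then eliminates $\rho$ between the two contractions to solve for $p$, you contract with the projector $h^{ij}=g^{ij}+v^iv^j$ and isolate $3\kappa^2p$ in one step --- an equivalent linear combination of the paper's two contractions, with the factor of $3$ handled correctly via $g_{ij}h^{ij}=3$.
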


\begin{proof}
We have
\begin{equation}
F_R(R)R_{ij}-\frac{F(R)}{2}g_{ij}=\kappa^2(p+\rho) v_i v_j+\kappa^2 p g_{ij},\label{trho}
\end{equation}
which on contraction gives
\begin{equation}
R=\frac{3p-\rho}{F_R(R)}\kappa^2+2\frac{F(R)}{F_R(R)} \label{aa}
\end{equation}
On the other hand, by transvecting with $v^i v^j$  on  in (\ref{trho}), we get
\begin{equation}
v^i v^jR_{ij}=\frac{\kappa^2\rho-F(R)/2}{F_R(R)}.\label{e}
\end{equation}
From (\ref{aa}) and (\ref{e}) we obtain
\begin{equation}
F_R(R)v^i v^jR_{ij}+\frac{F(R)}{2}=\kappa^2 \rho=-RF_R(R)+3\kappa^2p+2F(R),
\end{equation}
which gives $3\kappa^2p=-\frac{3F(R)}{2}+F_R(R)\{v^i v^jR_{ij}+R\}$.
\end{proof}

\begin{theorem}
If a perfect fluid spacetime with constant $R$ satisfying $F(R)$-gravity obeys the timelike convergence condition, then $\rho\geq\frac{F(R)}{2\kappa^2}$, provided the four velocity vector is $ v^i$. 
\end{theorem}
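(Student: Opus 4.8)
The plan is to read the timelike convergence condition as the geometric statement that $R_{ij}v^iv^j\ge 0$ for the unit timelike four-velocity $v^i$, and then to combine it with the closed-form expression for the energy density established in the previous theorem. Recall that there we obtained
$$\rho=\frac{2v^iv^jR_{ij}F_R(R)+F(R)}{2\kappa^2}.$$
Since every geometric and functional ingredient on the right-hand side is already in hand, the argument will amount to a short inequality chase rather than a fresh computation.

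First I would invoke the positivity of the scalaron term, namely $F_R(R)>0$, which was argued earlier from the requirement that the effective gravitational coupling $G^{\text{eff}}=G/F_R(R)$ be positive (equivalently, that the graviton not be a ghost). With $F_R(R)>0$ fixed, the sign of the leading term in the numerator of $\rho$ is controlled entirely by the scalar $v^iv^jR_{ij}$.

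Next I would apply the hypothesis. The timelike convergence condition gives $v^iv^jR_{ij}\ge 0$, hence $2v^iv^jR_{ij}F_R(R)\ge 0$. Substituting this into the expression for $\rho$ yields
$$\rho=\frac{2v^iv^jR_{ij}F_R(R)+F(R)}{2\kappa^2}\ge\frac{F(R)}{2\kappa^2},$$
which is precisely the claimed bound, so the proof terminates here.

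The only genuine subtlety, and the step I would flag as the main obstacle, is ensuring the sign conventions are mutually consistent. One must verify that the timelike convergence condition is indeed taken as $R_{ij}v^iv^j\ge 0$ under the $(-,+,+,+)$ signature with $v_iv^i=-1$, and that $F_R(R)$ is genuinely positive in the regime of interest rather than merely nonzero. Once these conventions are pinned down, the inequality is immediate and no further analysis is needed.
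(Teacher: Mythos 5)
Your proposal is correct and follows essentially the same route as the paper: the density formula $\rho=\frac{2v^iv^jR_{ij}F_R(R)+F(R)}{2\kappa^2}$ you quote from the preceding theorem is just a rearrangement of the relation $v^iv^jR_{ij}=\frac{\kappa^2\rho-F(R)/2}{F_R(R)}$ that the paper invokes, and both arguments conclude by combining the timelike convergence condition $R_{ij}v^iv^j\geq 0$ with the positivity of $F_R(R)$. Your flagged subtlety about sign conventions is also handled the same way in the paper, which assumes $F_R(R)>0$ on the no-ghost/attractive-gravity grounds you cite.
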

\begin{proof}
$ v_i$ is a timelike vector, hence timelike convergence implies that $$v^i v^jR_{ij}\geq 0.$$ As discussed earlier, $F_R(R)$ is positive to ensure attractive gravity. 
Therefore, from (\ref{e}) we obtain the result. Equivalently, the result holds for $3\kappa^2p\geq RF_R(R)-\frac{3F(R)}{2}$. 
\end{proof}

\section{{Energy conditions in a $(WRS)_4$}}\label{conditions}

While exploring the possibility of different matter sources in the field equations of gravity, in general relativity and the extended theories of gravity, energy conditions come in handy to constrain the energy-momentum tensor and preserve the idea that not only the gravity is attractive but also the energy density is positive. In the case of a perfect fluid type effective matter in the $F(R)$-gravity theory, the conditions are given by:  

\begin{itemize}
\item Null energy condition \textbf{(NEC)}:\quad$\rho^{\text{eff}}+p^{\text{eff}}\geq 0$. 

\item Weak energy condition \textbf{(WEC)}: \quad $\rho^{\text{eff}}\geq 0$ and $\rho^{\text{eff}}+p^{\text{eff}}\geq 0$.

\item Strong energy condition \textbf{(SEC)}:\quad $\rho^{\text{eff}}+3p^{\text{eff}}\geq 0$ and $\rho^{\text{eff}}+p^{\text{eff}}\geq 0$.
\item Dominant energy condition \textbf{(DEC)}:\quad $\rho^{\text{eff}}\pm p^{\text{eff}}\geq 0$ and $\rho^{\text{eff}}\geq 0$. 
\end{itemize}
Using (\ref{eff}) and a non-negative $F(R)-RF_R(R)$, we can rewrite the conditions as the following \cite{Sahoo},

\begin{itemize}
\item Null energy condition \textbf{(NEC)}:\quad$\rho+p\geq 0$. In the present scenario, this gives us $RF_R(R)\leq 0$. 

\item Weak energy condition \textbf{(WEC)}: \quad $\rho\geq 0$ and $\rho+p\geq 0$. In the present context we obtain $F(R)-2RF_R(R)\geq 0$, together with $RF_R(R)\leq 0$

\item Strong energy condition \textbf{(SEC)}:\quad $\rho+3p\geq 0$ and $\rho+p\geq 0$. In the present context we obtain $RF_R(R)\leq 0$ when $F(R)+RF_R(R)\leq 0$.

\item Dominant energy condition \textbf{(DEC)}:\quad $\rho\pm p\geq 0$ and $\rho\geq 0$. In the present context we obtain $F(R)- 2RF_R(R)\geq 0$, together with $RF_R(R)\leq 0.$
\end{itemize} 

Since, $F_R(R)>0$ and $R\neq 0$, in our present model, $RF_R(R)\leq 0$ implies a negative Ricci scalar $R<0$.

\section{{Analysis of some toy models of $F(R)$-gravity in $(WRS)_4$}}
Many $F(R)$ models have been proposed in the literature. Here we consider few models of $F(R)$-gravity theories to analyse our results in a $(WRS)_4$ with constant Ricci scalar setting. 

\textbf{Case I:} $F(R)=\exp(\alpha R)$, $\alpha$ is a constant. There are some exponential models studied so far by S. I. Kruglov and S. D. Odintsov \cite{Kruglov,Odintsov}. The pressure and energy density for a perfect fluid continuum reads,
\begin{equation}
p=-\frac{e^{\alpha R}}{2 \kappa^{2}} \hspace{0.5 in} \text{and} \hspace{0.5 in} 
\rho=\frac{e^{\alpha R}(1-2 \alpha R)}{2 \kappa^{2}}.
\end{equation}
It is known that the equation of state parameter (EoS) is a relationship between pressure and energy density given by $\frac{p}{\rho}$. It is used to study the accelerated and decelerated phase of the universe. The universe exhibits various phases according to the various values of $\omega$.  In the case of $\omega=\frac{1}{3}$, the universe is governed by radiation. In the accelerated evolutionary phase, the quintessence phase is shown by $-1\leq \omega \leq 0 $ and the cosmological constant is shown by $\omega=-1$, i.e., $\Lambda$CDM model. The equation of EoS parameter in this case reads as $\omega= \frac{1}{2 R \alpha -1}$.

\begin{figure}[H]
\centering
\includegraphics[width= 8 cm]{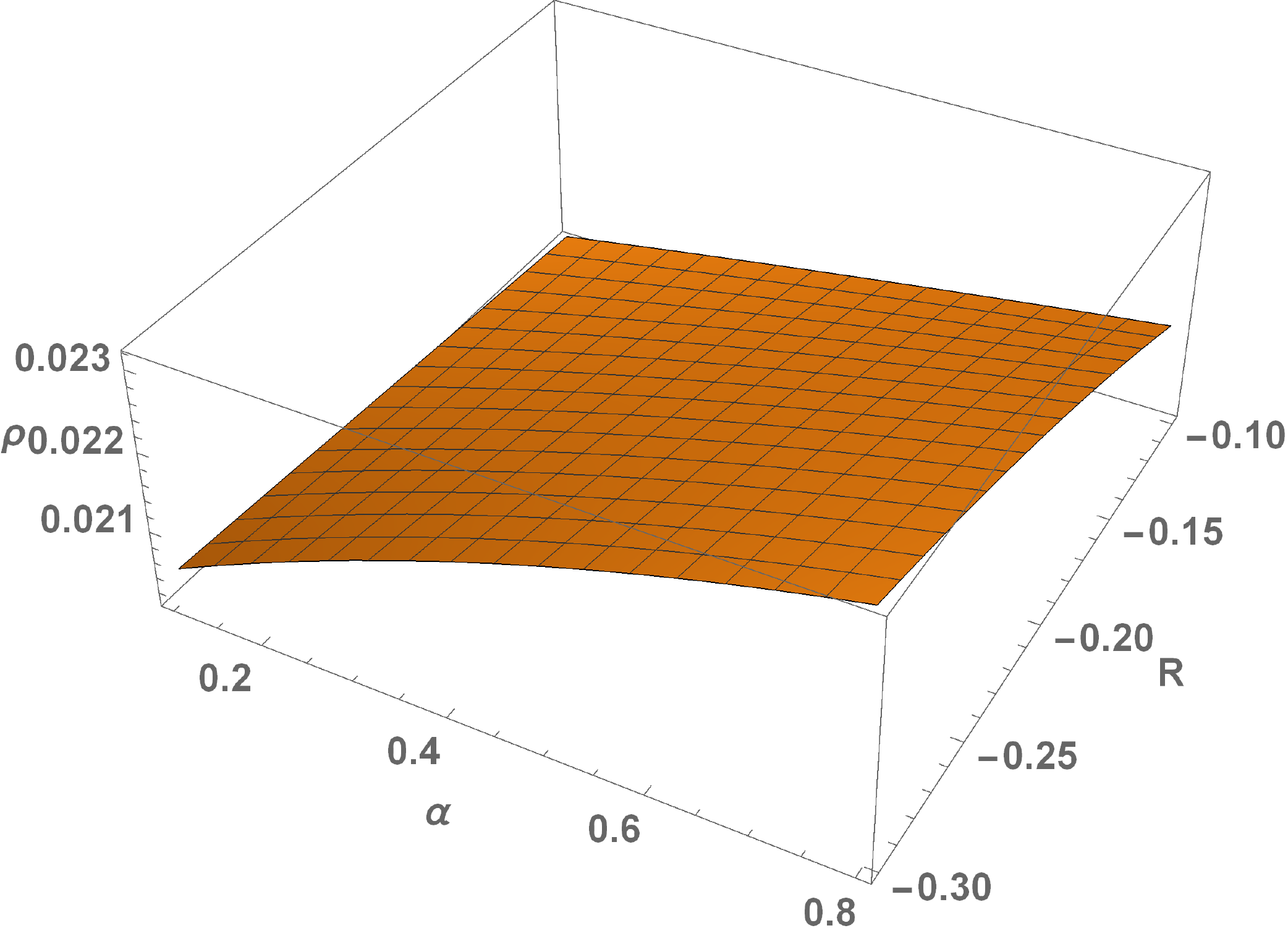} 
\caption{Density parameter for $F(R)=\exp(\alpha R)$ with $ 0.1 \leq \alpha \leq 0.8$ and $-0.3 \leq R\leq -0.1$.}
\label{Fig-1}
\end{figure}

\begin{figure}[H]
\centering
\includegraphics[width=8 cm]{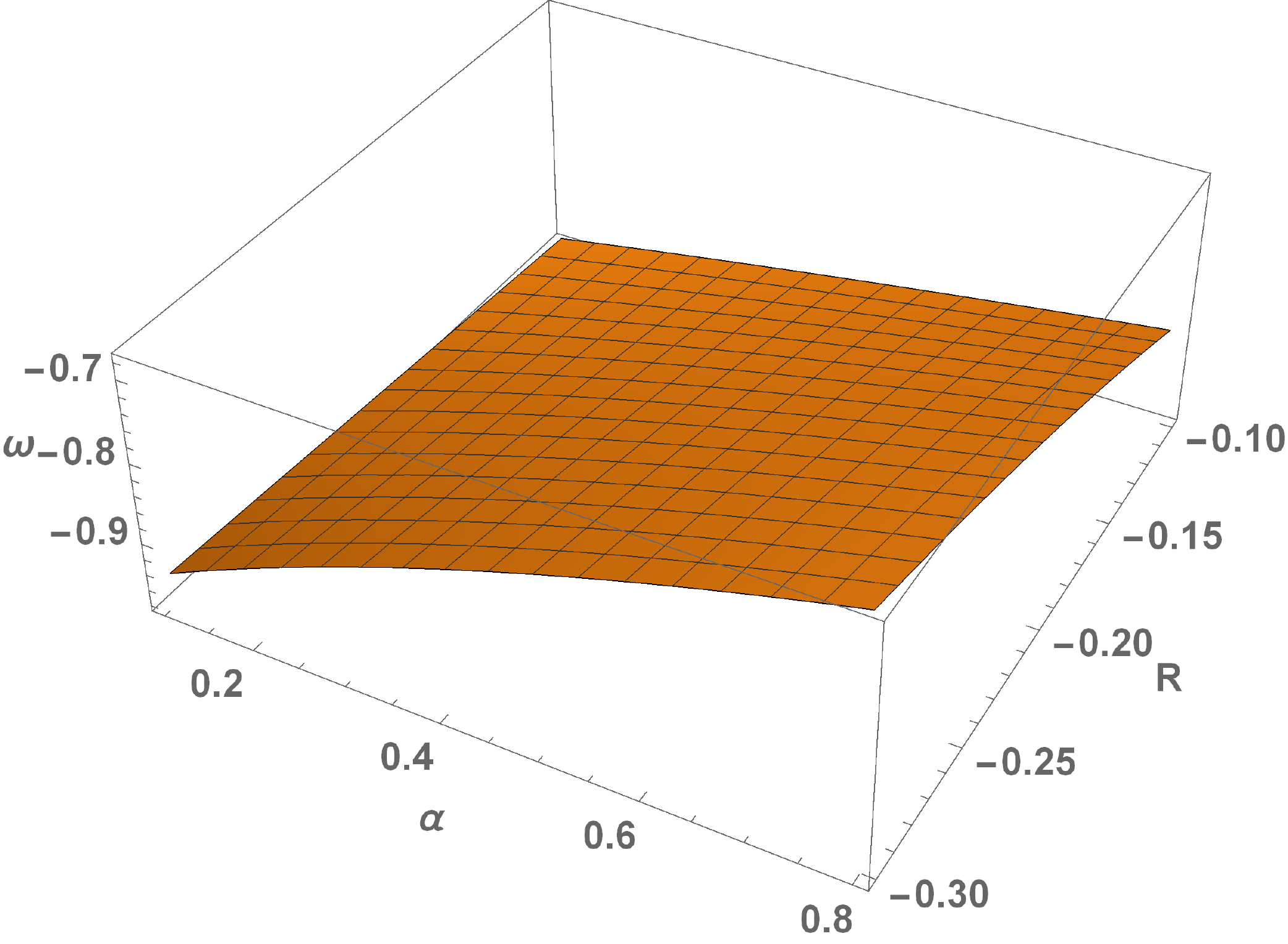}
\caption{EoS parameter for $F(R)=\exp(\alpha R)$ with $ 0.1 \leq \alpha \leq 0.8$ and $-0.3 \leq R\leq -0.1$.}
\label{Fig-2}
\end{figure}

The plots of density and EoS parameter are shown in Fig. \ref{Fig-1} and Fig. \ref{Fig-2}, which depict the positive behavior of the density parameter and EoS parameter nearly close to -1. Therefore,  the  EoS parameter is considered a suitable candidate for comparing our models with $\Lambda$CDM. In this section, the EoS parameter is useful in constraining the model parameters to study various energy conditions.
The previous section clearly states the conditions on $R$ to satisfy various energy conditions. So, Fig. \ref{fig-3} shows the behavior of NEC, DEC, and SEC. Since we know WEC is the combination of NEC and positive density. We can observe NEC, WEC, and DEC satisfy the conditions, whereas SEC violates them. According to recent observational studies \cite{Planck}, the violation of SEC depicts the accelerated expansion of the universe.

\begin{figure}[H]
\centering  
\subfigure[ NEC]{\includegraphics[width=0.49\linewidth]{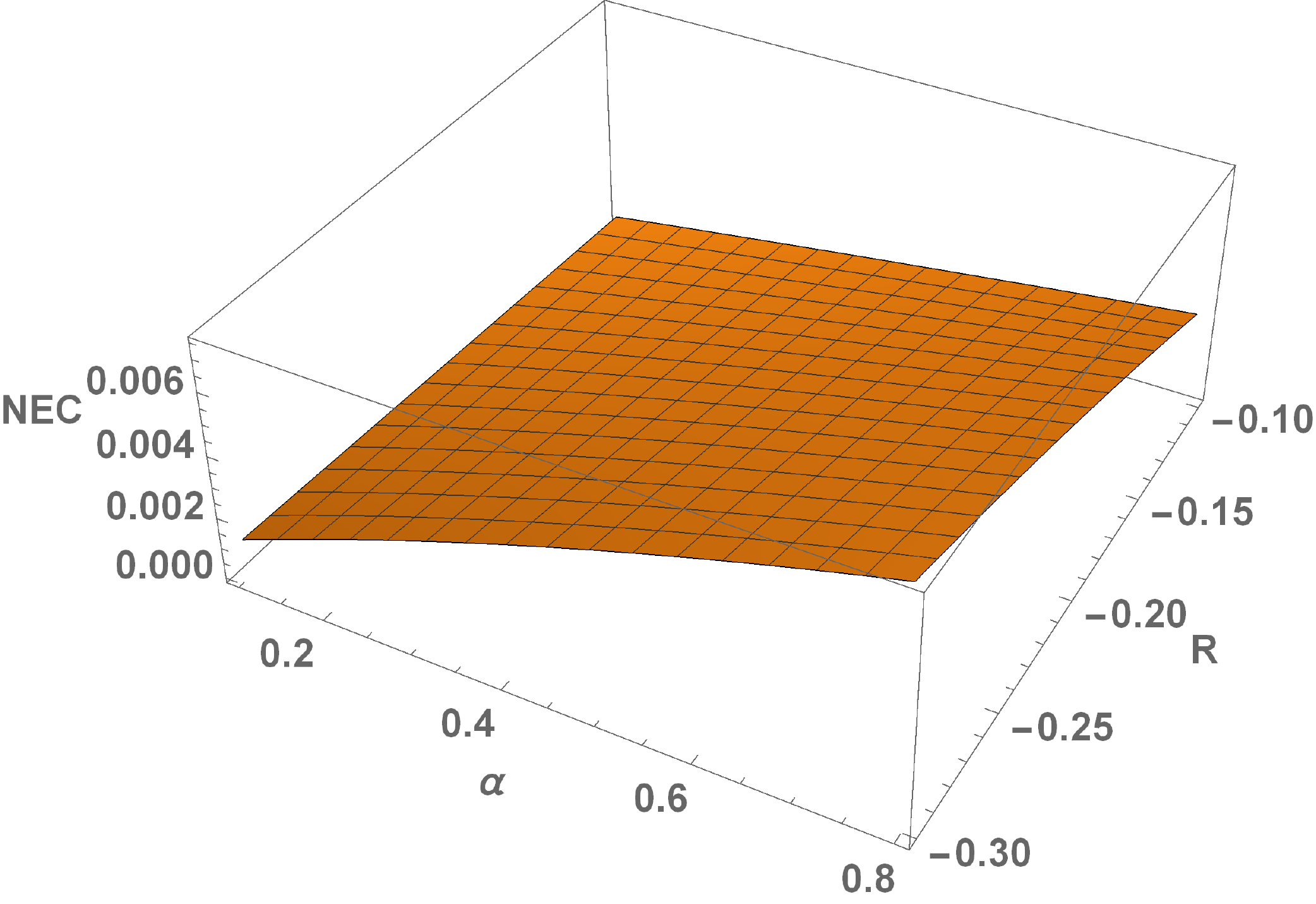}}
\subfigure[ DEC]{\includegraphics[width=0.49\linewidth]{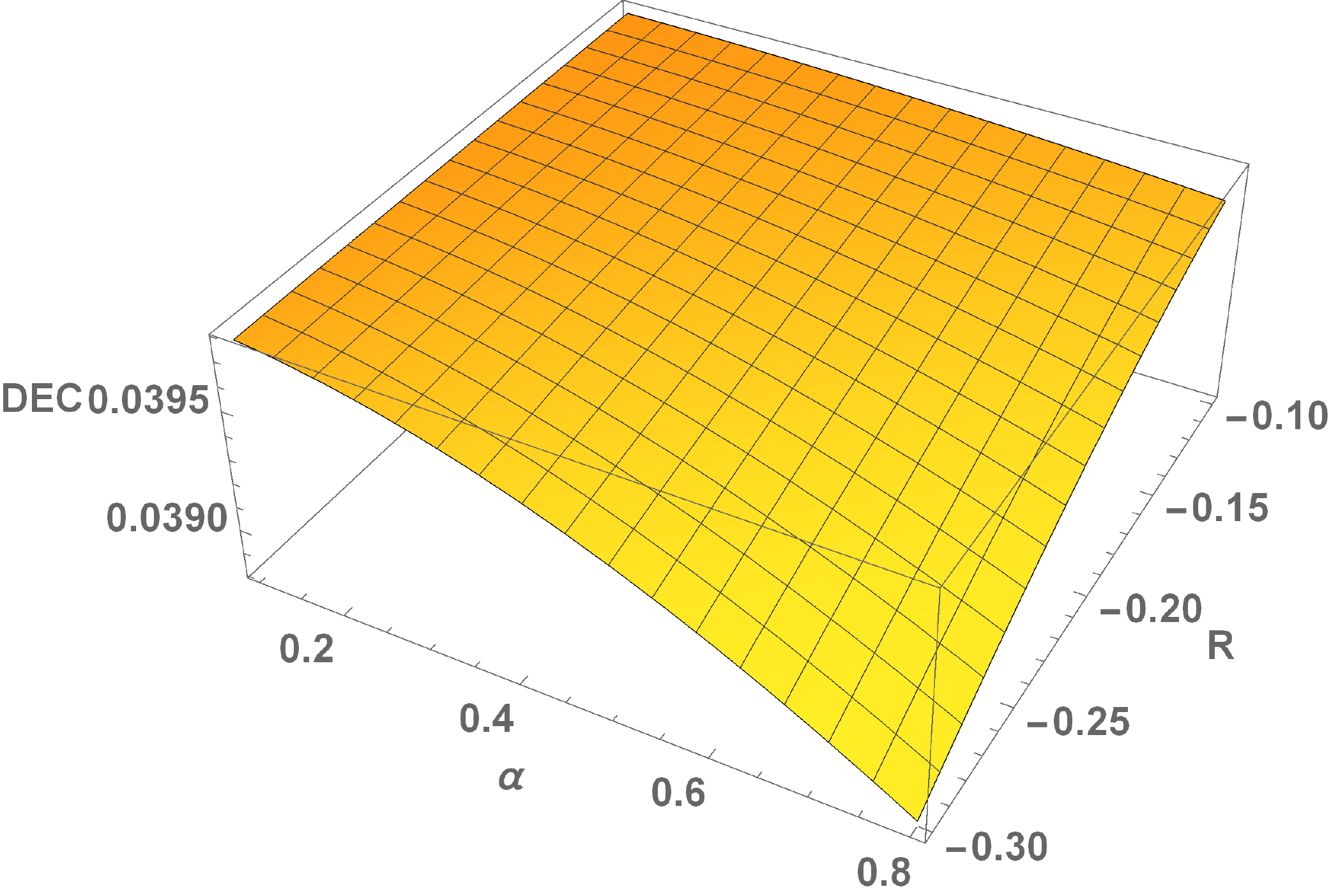}}
\subfigure[ SEC]{\includegraphics[width=0.49\linewidth]{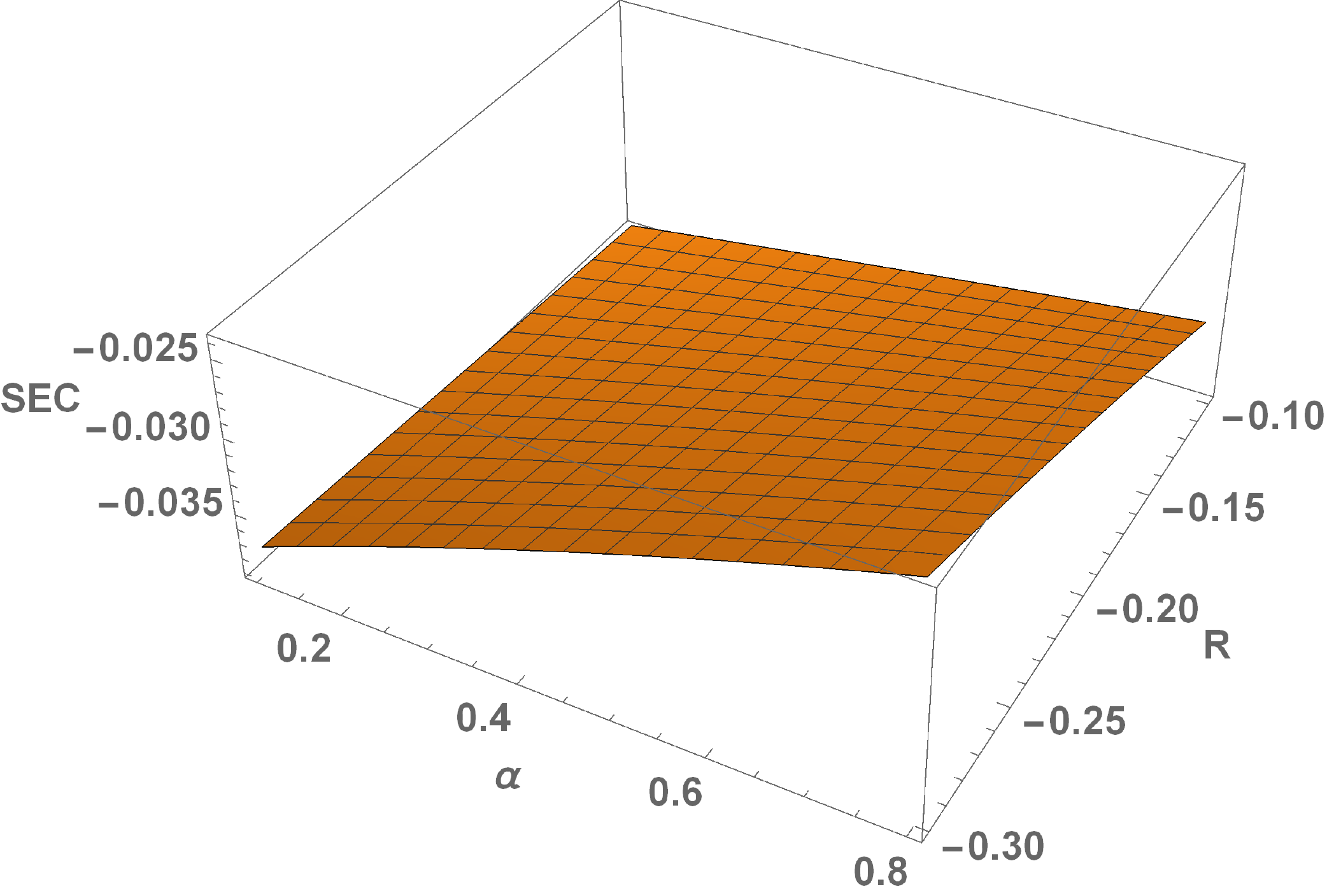}}

\caption{Energy conditions for $F(R)=\exp(\alpha R)$ with $ 0.1 \leq \alpha \leq 0.8$ and $-0.3 \leq R\leq -0.1$.}\label{fig-3}
\end{figure}
\textbf{Case II:} $F(R)=R- \beta \tanh(R) $, $\beta$ is a constant. S. 
Tsujikawa, Appleby and Battye also studied this type of $F(R)$ model \cite{Shinji}, \cite{Appleby}. 
The pressure and energy density for a perfect fluid continuum are as follows:
\begin{equation}
p=\frac{\beta \tanh(R)-R}{2 \kappa^{2}} \text{and} \hspace{0.5 in} 
\rho=\frac{2\beta R  \mathrm{\,sech}^{2}(R)-\beta \tanh(R)-R}{2 \kappa^{2}}.
\end{equation}

The equation of state parameter in this case reads as 
\begin{equation}
\omega= \frac{\beta  \tanh (R)-R}{-\beta  \tanh (R)+2 \beta  R \mathrm{\,sech}^2(R)-R}.
\end{equation}

\begin{figure}[H]
\centering
\includegraphics[width=8 cm]{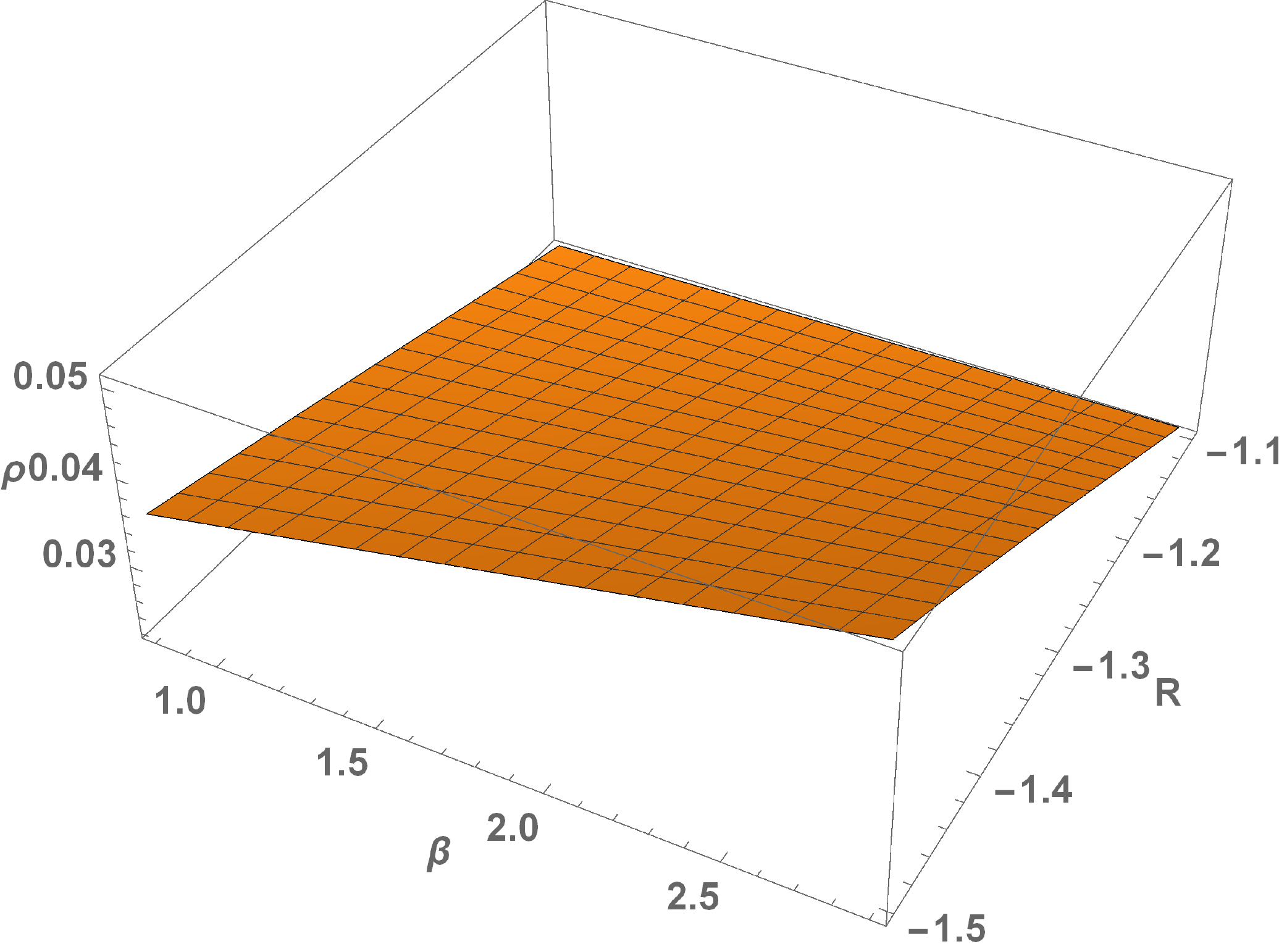} 
\caption{Density parameter for $F(R)=R- \beta \tanh(R) $  with $0.8\leq \beta \leq 2.9$ and  $ -1.5 \leq R\leq  -1.1$.}
\label{Fig-4}
\end{figure}

\begin{figure}[H]
\centering
\includegraphics[width=8 cm]{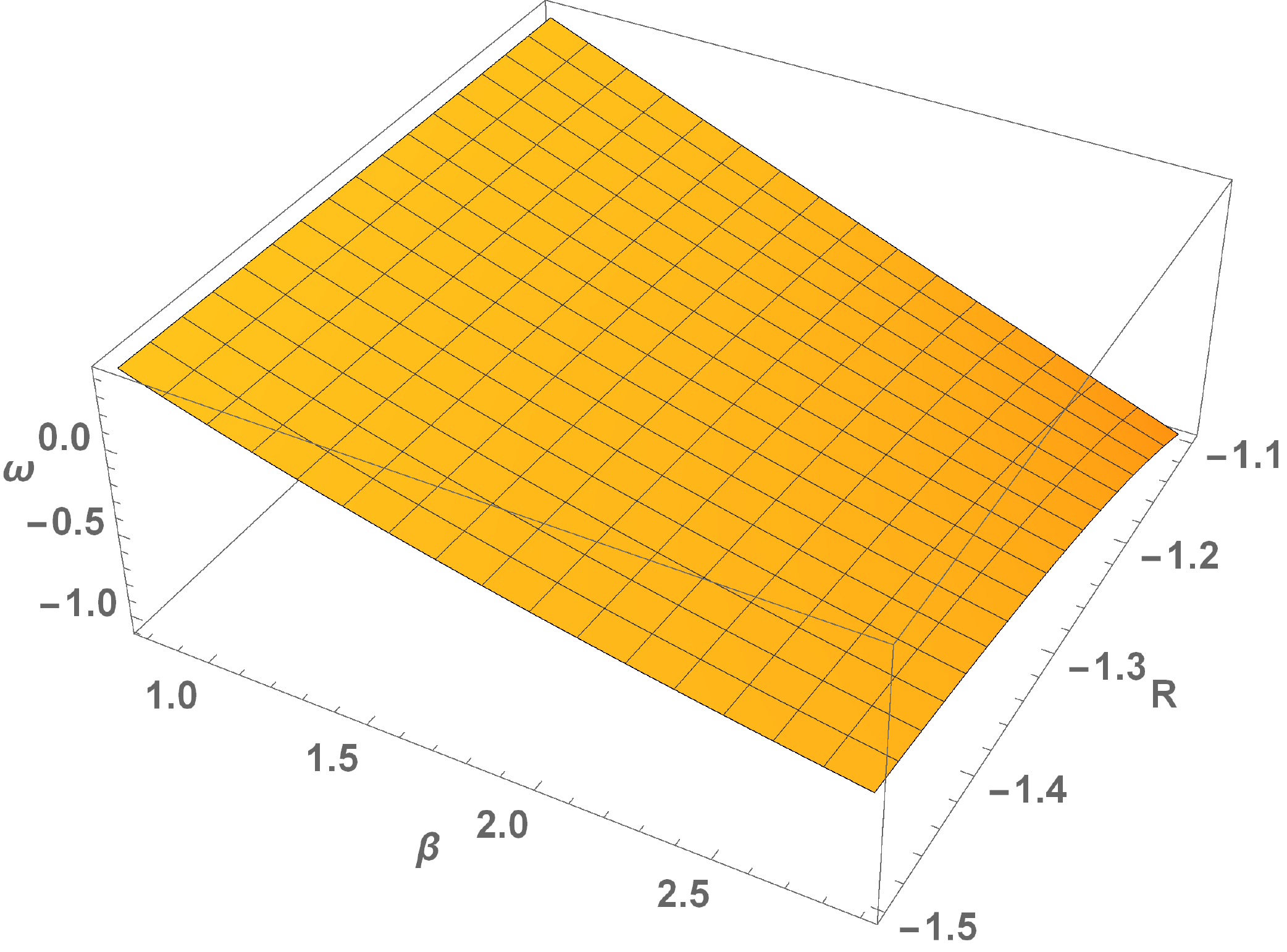}
\caption{EoS parameter for $F(R)=R- \beta \tanh(R) $ with $0.8\leq \beta \leq 2.9$ and  $ -1.5 \leq R\leq  -1.1$.}
\label{Fig-5}
\end{figure}

The plots of density and EoS parameter are shown in Fig. \ref{Fig-4}, and Fig. \ref{Fig-5}. The behavior of density is positive whereas, the EoS parameter shows the phase transition from positive to negative depicting a  transition from radiation dominated era to the accelerated phase of the universe. The EoS parameter is close to $-1$, which is consistent with the $\Lambda$CDM model. 

\begin{figure}[H]
\centering  
\subfigure[ NEC]{\includegraphics[width=0.49\linewidth]{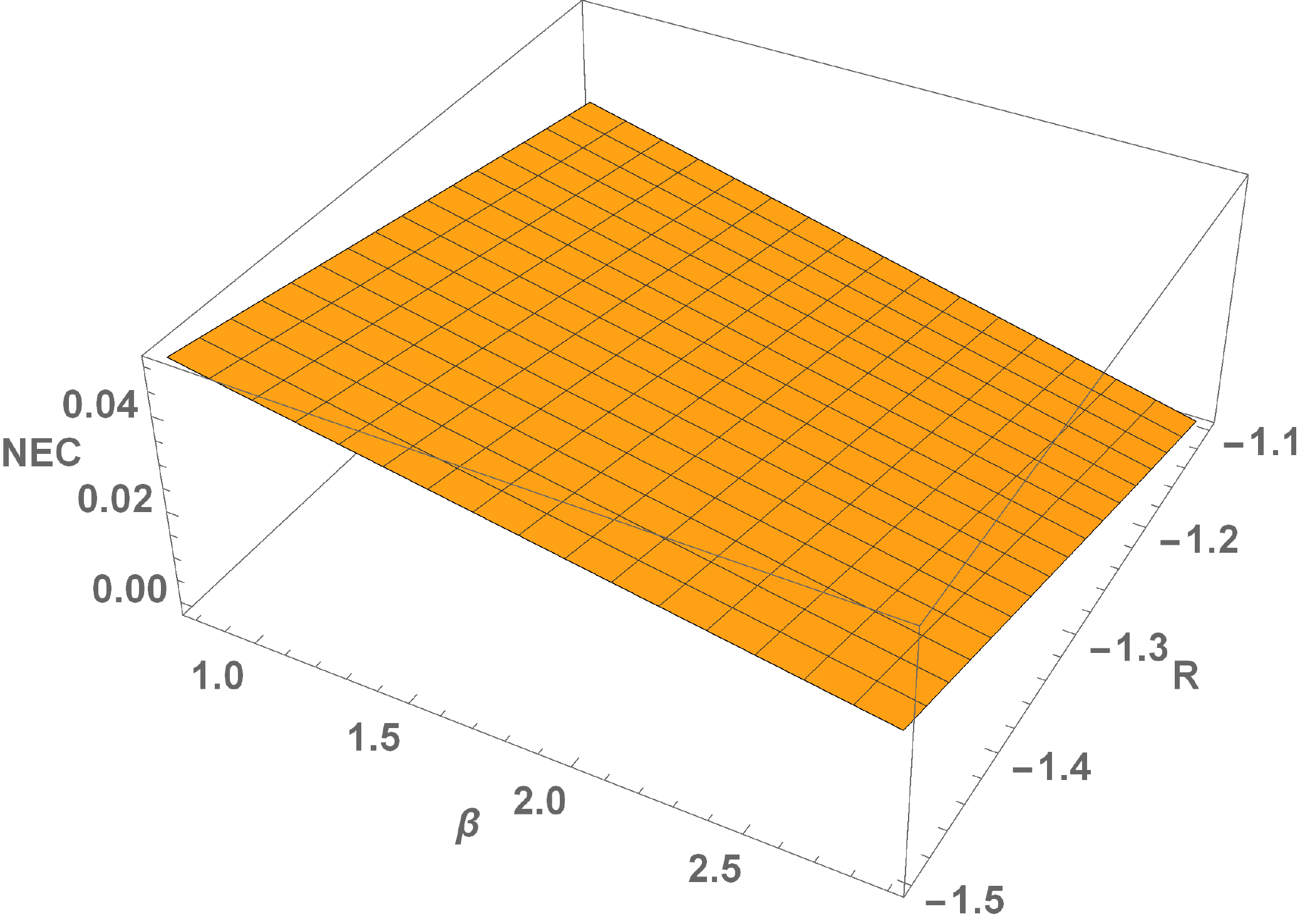}}
\subfigure[ DEC]{\includegraphics[width=0.49\linewidth]{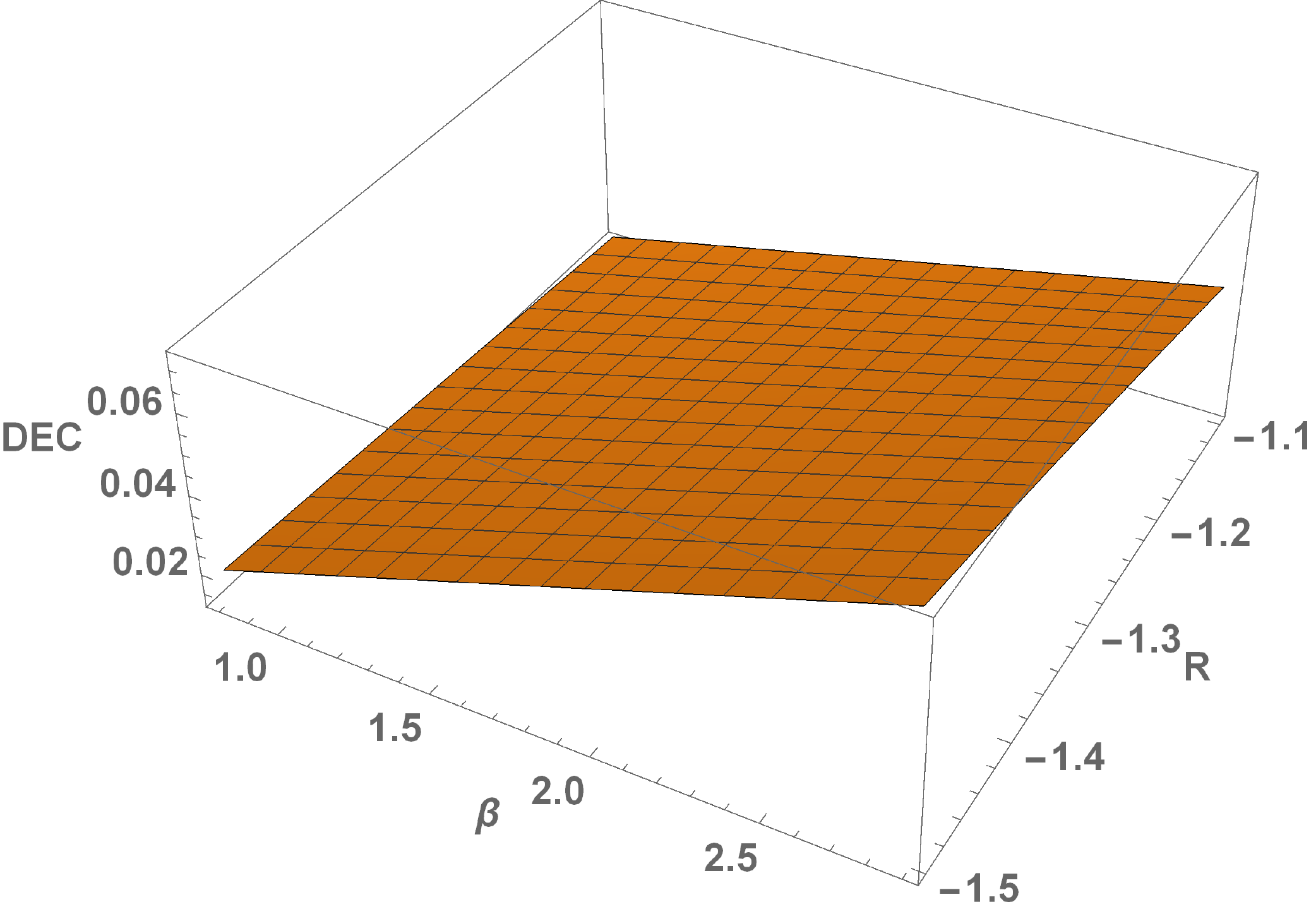}}
\subfigure[ SEC]{\includegraphics[width=0.49\linewidth]{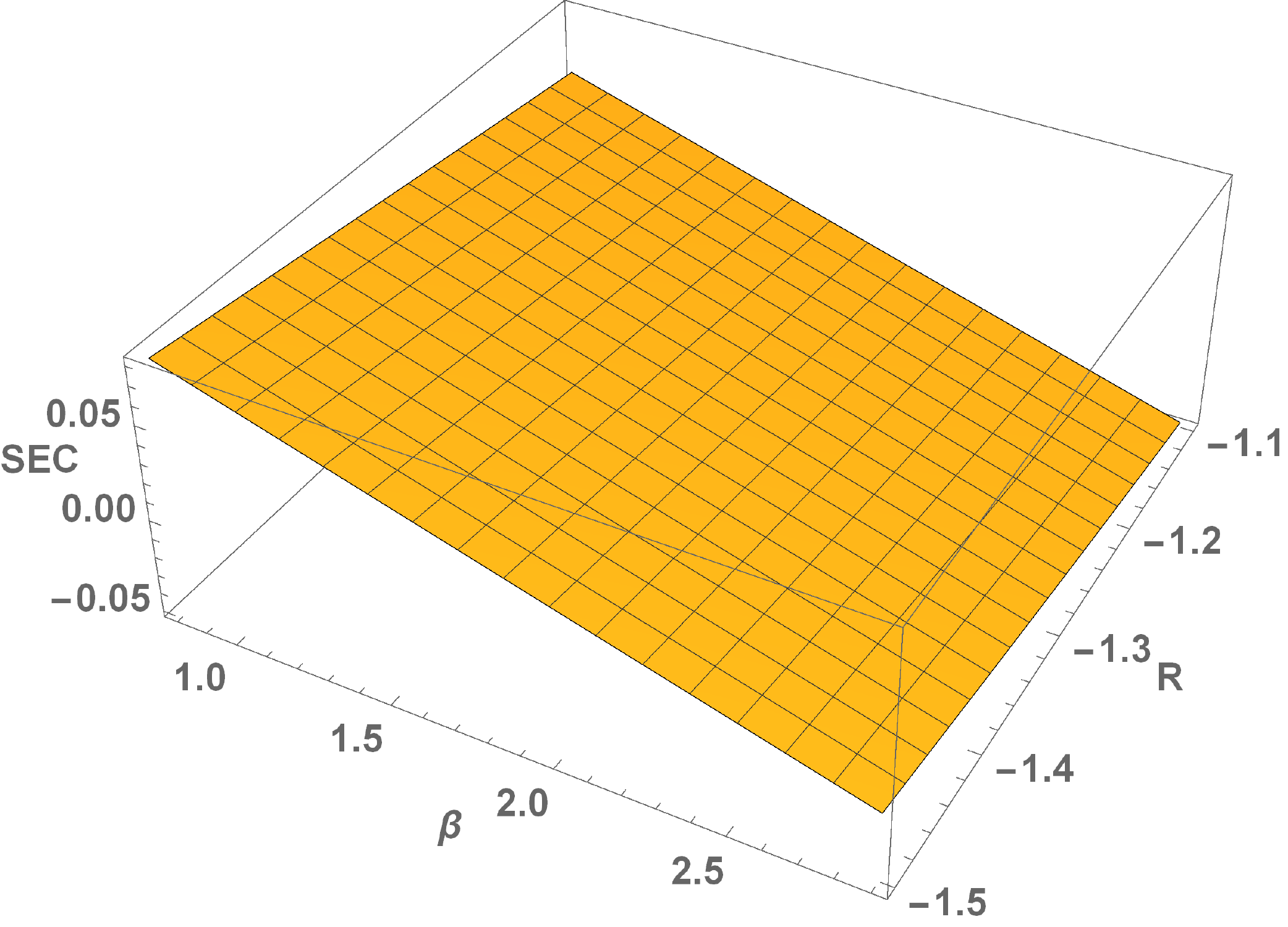}}

\caption{Energy conditions for $F(R)=R- \beta \tanh(R) $ with $0.8\leq \beta \leq 2.9$ and  $ -1.5 \leq R\leq  -1.1$.}\label{fig-6}
\end{figure}

Fig. \ref{fig-6} shows the behavior of NEC, DEC, and SEC. So, we can observe NEC, WEC, and DEC satisfy the conditions, whereas SEC shows the negative behavior, i.e., violating its condition. Therefore, according to recent observational studies, the SEC's violation depicts the universe's accelerated expansion.

\section{{Discussion}}
To understand the observed accelerated cosmic expansion, modified theories of gravity have gained considerable attention. One of the viability criteria of modified gravity theories is its compatibility with the causal and geodesic spacetime structure, which can be addressed through different energy conditions. In the present study, we investigate the strong, weak, null, and the dominant energy conditions for the modified $F(R)$-gravity theories under a geometric restriction of weakly Ricci symmetric type spacetimes. Nevertheless, the arbitrariness in choosing various functional forms of $F(R)$ used to constrain on physical grounds the several possible $F(R)$ gravity theories. 
In two separate $F(R)$ groups, i.e.,  $F(R)=e^{\alpha R}$ where $\alpha$ is constant and $F(R)=R-\beta \tanh(R)$, $\beta$ is a constant we explored different energy conditions.
 The conditions derived in Section \ref{conditions} are used to constrain the model parameters in two different $F(R)$ models. The model parameters must satisfy $R<0$, $\alpha \geq0$, 
and $\alpha \leq -\frac{1}{R}$ in Case I to satisfy different energy conditions, 
whereas in Case II, $R<0 $ and $\beta\geq 0 $ are provided.We observed that the null, weak and dominant energy conditions validate their requirements when $ 0.1 \leq \alpha \leq 0.8$, $-0.3 \leq R\leq -0.1$ and $0.8\leq \beta \leq 2.9$,  $ -1.5 \leq R\leq  -1.1$ for the two cases respectively. However, on the other, the strong energy conditions show negative behavior indicating the violation of the conditions. We conclude that the matter contained in this current setting is a perfect fluid with vanishing vorticity.

\section*{Acknowledgments}
Avik De and Tee-How Loo are supported by the grant FRGS/1/2019/STG06/UM/02/6. 
S. A. acknowledges CSIR, Govt. of India, New Delhi, for awarding Junior Research Fellowship. 
PKS   acknowledges CSIR, 
New Delhi, India for financial support to carry out the Research project [No. 03(1454)/19/EMR-II Dt 02/08/2019]. 
We are very much grateful to the honorable referee and the editor for the illuminating suggestions that 
have significantly improved our work in terms of research quality and presentation.

\end{document}